\newtheorem{theorem}{Theorem}
\theoremstyle{plain}
\newtheorem{definition}{Definition}
\newtheorem{proposition}{Proposition}
\numberwithin{equation}{section}
\newcommand{\g}{\mathcal{G}}
\begin{document}
\title[]{ On the Contraction of $so(4)$ to $iso(3)$}
\author{Eyal M. Subag}
\address{Department of Mathematics, Technion-Israel Institute of \newline
\indent  Technology, Haifa 32000, Israel }
\email{esubag@tx.technion.ac.il}%
\author{Ehud Moshe Baruch}
\address{Department of Mathematics, Technion-Israel Institute of \newline%
\indent Technology, Haifa 32000, Israel }
\email{embaruch@math.technion.ac.il}%
\author{Joseph L. Birman}
\address{Department of Physics,
The City College of the City University of  \newline%
\indent New York, New York N.Y.10031, USA \symbolfootnote[2]{Permanent Address} \newline
\indent and Department of Physics, Technion-Israel Institute of \newline%
\indent Technology, Haifa 32000, Israel \symbolfootnote[1]{Visiting Professor} }
\email {birman@sci.ccny.cuny.edu}%
\author{Ady Mann}
\address
{Department of Physics, Technion-Israel Institute of \newline%
\indent Technology, Haifa 32000, Israel}
\email{ady@physics.technion.ac.il}%
\maketitle
\begin{abstract}
For any skew-Hermitian integrable irreducible infinite dimensional representation $\eta$ of $iso(3)$, we find a sequence of (finite dimensional) irreducible representations $\rho_n$ of $so(4)$ which contract to $\eta$.
\end{abstract}
%

\section{Introduction}

One of the first known  examples of contraction of Lie algebra representations, given in the early work of \.{I}n\"{o}n\"{u} and Wigner \cite{IW1}, is the contraction of the representations of the Lie algebra $so(3)$ to those of $iso(2)$. In that example, starting from a sequence, $\left\{\rho_j\right\}_{j=1}^{\infty}$ of finite dimensional representations of $so(3)$ with increasing dimension,    they obtained an infinite dimensional representation, $\eta_q$ of $iso(2)$. They proved the contraction of the representations  by the following type of convergence of matrix elements:
\begin{eqnarray}
\lim_{j\longrightarrow \infty}	\left\langle ^j_{m'}| \rho_j(X)|^j_m\right\rangle=\left\langle ^q_{m'}| \eta_q(X)|^q_m\right\rangle
	\label{eq:1}
\end{eqnarray}
where $|^j_m\rangle$ (respectively  $|^q_m\rangle$)  is an element in an orthonormal basis of $\rho_j$ (respectively $\eta_q$).

In this paper we show that the same type of convergence of matrix elements as in (\ref{eq:1}), holds for the contraction of the finite dimensional irreducible representations of $so(4)$ to infinite dimensional irreducible representations of $iso(3)$.
The convergence is proved  using a less familiar description of the irreducible representations of $so(4)$ and $iso(3)$, due to  Pauli \cite{Pau}.
\long\def\symbolfootnote[#1]#2{\begingroup%
\def\thefootnote{\fnsymbol{footnote}}\footnote[#1]{#2}\endgroup}

Our paper is divided as follows: In sections 2 and 3 we will describe the representation theory of $so(4)$ and $iso(3)$ respectively. In section 4 we give the contraction of the algebra $so(4)$ to $iso(3)$ and prove the convergence of the appropriate matrix elements.
\setcounter{footnote}{0}
\section{Representation theory  of  $so(4)$.}
In this section and the one that follows we describe all the skew-Hermitian irreducible finite dimensional  representations of $so(4)$ and all the skew-Hermitian irreducible infinite dimensional representations of $iso(3)$. We recall that the Lie algebra $so(4)$ is the direct sum of two copies of the Lie algebra $so(3)$. Moreover every irreducible representation of $so(4)$ is a tensor product of two irreducible representations of $so(3)$. From the work of Weimar-Woods \cite{Wei1,Wei2} we know all the contractions of representations of $so(3)$ and hence we also know all the contractions of representations of $so(4)$ that respect the decomposition $so(4)=so(3)\oplus so(3)$. As noted in \cite{Wei1}, the contraction of $so(4)$ to $iso(3)$ does not respect this decomposition. Hence we use another description of the representations of $so(4)$ which was given by Pauli \cite{Pau}. The resemblance of the representations of $so(4)$ and $iso(3)$ in this description is more convenient for the contraction procedure.  We also give the relation between the parameterization of the irreducible representations as was given by Pauli \cite{Pau} and the more usual parameterization as a tensor product of two irreducible representations of $so(3)$.  \newline

The Lie algebra so(4) can be defined  by the basis $\left\{M_{1} ,M_{2} ,M_{3} ,N_{1} ,N_{2} ,N_{3} \right\}$  satisfying the following commutation relations:
\begin{eqnarray}
&&[M_i,M_j]=i\epsilon_{ijk}M_k \\
&& [N_i,N_j]=i\epsilon_{ijk}M_k  \\
&&[M_i,N_j]=i\epsilon_{ijk}N_k
\end{eqnarray}
where $\epsilon_{ijk}$ is the Levi-Civita totally antisymmetric symbol.
We will describe all the irreducible finite dimensional integrable representations of $so(4)$ in terms of another  basis which is $\left\{M_+ ,M_- ,M_{3} ,N_+ ,N_- ,N_{3} \right\}$, where $M_\pm=M_{1}\pm iM_{2} ,N_\pm=N_{1}\pm iN_{2}$.
$so(4)$ has two independent invariants (Casimir operators) $\vec{M} \cdot \vec{N}$ and $\frac{1}{2}(M^2 + N^2 )$\footnote{$\vec{M} \cdot \vec{N}$ and $M^2 + N^2$ are elements of the center of the universal enveloping algebra of $so(4)$ and  are given by: $\vec{M}=\left(M_1,M_2,M_3\right)$, $ \vec{N}=\left(N_1,N_2,N_3\right)$, $\vec{M} \cdot \vec{N}=\sum_{i=1}^3M_iN_i$, $M^2 + N^2=\sum_{i=1}^3\left(M^2_i+N^2_i\right)$}. On each irreducible representation of $so(4)$, $\vec{M} \cdot \vec{N}$ and $M^2 + N^2$ act as scalar operators with scalars which we denote by $G$ and $F$ respectively.
These two scalars determine uniquely (up to an isomorphism) the irreducible representation of $so(4)$.
We denote the irreducible representation of $so(4)$ with $F$ and $G$ by $\rho_{F,G}:so(4)\longrightarrow gl(V_{F,G})$. The representation space $V_{F,G}$ has an orthonormal basis of the form $\left\{|^{F,G}_{j,m_j} \rangle:j\in \left\{j_0, j_0+1,...,n\right\}, m_j\in \left\{-j,-j+1,...,j\right\}\right\}$, where  $j_0,n\in \left\{0,\frac{1}{2},1,\frac{3}{2},2,...\right\}$ and they satisfy $G^2=j_0^2 (n+1)^2$, $2F=j_0^2 +(n+1)^2 -1$.
The dimension of $V_{F,G}$ is given by $\sum_{i=j_0}^{n}(2i+1)= (n+1)^2-j_0^2 $.  The representation $\rho_{F,G}$ is given by:
\begin{eqnarray}
&& \rho_{F,G}(M_{3})|^{F,G}_{j,m_j} \rangle=m_j|^{F,G}_{j,m_j} \rangle \\
&& \rho_{F,G}(M_{\pm})|^{F,G}_{j,m_j} \rangle=\sqrt{(j\mp m_j)(j\pm m_j+1)}|^{F,G}_{j,m_j\pm 1} \rangle
\end{eqnarray}
\begin{eqnarray}
\rho_{F,G} (N_{3})|^{F,G}_{j,m_j} \rangle=&&\alpha^{F,G}_{j}\sqrt{(j+m_j)(j-m_j)}|^{F,G}_{j-1,m_j} \rangle\\ \nonumber
&& +\beta^{F,G}_j m_j|^{F,G}_{j,m_j} \rangle+\\  \nonumber
&&\alpha^{F,G}_{j+1}\sqrt{(j+m_j+1)(j-m_j+1)}|^{F,G}_{j+1,m_j} \rangle
\end{eqnarray}
\begin{eqnarray}
\rho_{F,G}(N_{\pm})|^{F,G}_{j,m_j}\rangle=&& \pm \alpha^{F,G}_{j}\sqrt{(j\mp m_j)(j\mp m_j-1)}|^{F,G}_{j-1,m_j \pm 1}\rangle\\  \nonumber
&& +\beta^{F,G}_j\sqrt{(j \mp m_j)(j\pm m_j+1)}|^{F,G}_{j,m_j \pm 1}\rangle\\  \nonumber
&& \mp \alpha^{F,G}_{j+1}\sqrt{(j \pm m_j+1)(j \pm m_j+2)}|^{F,G}_{j+1,m_j  \pm 1}\rangle
\end{eqnarray}
where
\begin{equation}
\beta_j^{F,G}=\frac{G}{j(j+1)}
\end{equation}
\begin{equation}
\alpha_j^{F,G}=\sqrt{\frac{2F+1-j^2-\frac{G^2}{j^2})}{(2j+1)(2j-1)}}
\end{equation}
\begin{center}
\item \subsection{\textbf{$so(4)$ as the direct sum $so(3) \oplus so(3)$}}
\end{center}
We define $K_{i}\equiv \frac{1}{2}(M_{i}+N_{i}) ,L_{i}\equiv \frac{1}{2}(M_{i}-N_{i}) ,i=1,2,3$ and we get a new basis for $so(4)$,\newline $\left\{K_1,K_2,K_3,L_1,L_2,L_3\right\}$, satisfying the following commutation relations:
\begin{eqnarray}
&& \left[K_i,K_j\right]=i\epsilon_{ijk}K_k \\
&& \left[L_i,L_j\right]=i\epsilon_{ijk}L_k \\
&& \left[K_i,L_j\right]=0.
\end{eqnarray}
We see that either $\left\{K_1,K_2,K_3\right\}$ or $\left\{L_1,L_2,L_3\right\}$ span an ideal of $so(4)$, which is isomorphic to $so(3)$ and hence, $so(4)=so(3)\oplus so(3)$.
The invariant operators in terms of this basis are:
\begin{eqnarray}
&& \vec{M} \cdot \vec{N} =K^2-L^2 \\
&& \frac{1}{2}(M^2 + N^2 )=K^2+L^2
\end{eqnarray}
It is well known\footnote{For example \cite{Sin}.} that each irreducible finite dimensional  representation of $so(4)$ is a tensor product of two irreducible finite dimensional representations of $so(3)$. So for each irreducible representation $\rho_{F,G}:so(4)\longrightarrow gl(V_{F,G})$ there are some $k,l\in \left\{0, \frac{1}{2}, 1, \frac{3}{2} , 2,...\right\}$ such that the representation
$\rho^k\otimes \rho^l: so(3)\oplus so(3)\longrightarrow gl(V_k\otimes V_l)$ is isomorphic to $\rho_{F,G}$. The representation $(\rho^j,V_j)$ is the unique\footnote{There is only one for each positive integer dimension, up to an isomorphism of representations and these are all the finite dimensional irreducible representations of $so(3)$. See for example \cite{Bri}.} irreducible representation of $so(3)$ with dimension $2j+1$.\newline
For the irreducible representation of $so(4)$ from Pauli's description, $\rho_{F,G}$, which is isomorphic to  $\rho^k\otimes \rho^l$, we have the following relations:
\begin{eqnarray}
&& G=k(k+1)-l(l+1)=\pm j_0(n+1)  \\
&& F=k(k+1)+l(l+1)=\frac{j_0^2+(n+1)^2-1}{2}  \\
&& dimV_{F,G}=dim\left(V_K\otimes V_l\right)=(2k+1)(2l+1)=\\ \nonumber
&&(n+1)^2-j_0^2=\left(2\sqrt{1+2(F+G)}-1\right)\left(2\sqrt{1+2(F-G)}-1\right)\\
&& n=k+l=-2+\sqrt{1+2(F+G)}+\sqrt{1+2(	F-G)}  \\
&& j_0=|k-l|=|\sqrt{1+2(F+G)}-\sqrt{1+2(F-G)}|  \\
&& k=-1+\sqrt{1+2(F+G)}  \\
&& l=-1+\sqrt{1+2(	F-G)} 
\end{eqnarray}
\begin{eqnarray}
&& G>0\Longrightarrow l<k, k=\frac{n+j_0}{2}, l=\frac{n-j_0}{2}  \\
&& G<0\Longrightarrow l>k, k=\frac{n-j_0}{2}, l=\frac{n+j_0}{2}  \\
&& G=0\Longrightarrow l=k=\frac{n}{2}, j_0=0
\end{eqnarray}
The two pairs of parameters $(k,l)$ and $(F,G)$ are equivalent and knowing the value of one of these pairs determines uniquely the irreducible representation. The pair $(j_0,n)$ does not determine uniquely the irreducible representation, but the values of $(j_0,n)$ along with the knowledge of the sign of $G$ does.
\section{Representation theory  of $iso(3)$}
The Lie algebra $iso(3)$ can be defined by the basis $\left\{J_{1} ,J_{2} ,J_{3} ,P_{1} ,P_{2} ,P_{3} \right\}$   satisfying the following commutation relations:
\begin{eqnarray}
&&[J_i,J_j]=i\epsilon_{ijk}J_k \\
&&[P_i,P_j]=0 \\
&&[J_i,P_j]=i\epsilon_{ijk}P_k
\end{eqnarray}
We will describe all the skew-hermitian irreducible integrable infinite dimensional  representations of $iso(3)$ in the basis  $\left\{J_+ ,J_- ,J_{3} ,P_+ ,P_- ,P_{3} \right\}$ where  $J_\pm=J_{1}\pm iJ_{2}$  , $P_\pm=P_{1}\pm iP_{2}$.
$iso(3)$ has two independent invariants (Casimir operators) $P^2$ and $\vec{J} \cdot \vec{P}$.
On each irreducible representation of $iso(3)$, $P^2$ and $\vec{J} \cdot \vec{P}$ act as scalar operators with the scalars which we denote by $p^2$ and $C$ respectively.
These two scalars determine uniquely (up to an isomorphism) the irreducible representation of $iso(3)$.
We denote the irreducible representation of $iso(3)$ with given  $p^2$ and $C$ by $\eta_{p^2,C}:iso(3)\longrightarrow gl(W_{p^2,C})$. The representation space  $W_{p^2,C}$ has an orthonormal basis of the form $\left\{|^{p^2,C}_{j,m_j} \rangle:j\in \left\{j_0, j_0+1,...\right\}, m_j\in \left\{-j,-j+1,...,j\right\}\right\}$, where $j_0\in \left\{0,\frac{1}{2},1,\frac{3}{2},2,...\right\}$ and they satisfy $C^2=j_0^2p^2$. All the $W_{p^2,C}$ are infinite dimensional.
The representation $\eta_{p^2,C}$ is given by:
\begin{eqnarray}
&& \eta_{p^2,C}(J_{3})|^{p^2,C}_{j,m_j} \rangle=m_j|^{p^2,C}_{j,m_j} \rangle \\
&& \eta_{p^2,C}(J_{\pm})|^{p^2,C}_{j,m_j} \rangle=\sqrt{(j\mp m_j)(j\pm m_j+1)}|^{p^2,C}_{j,m_j\pm 1} \rangle
\end{eqnarray}
\begin{eqnarray}
\eta_{p^2,C} (P_{3})|^{p^2,C}_{j,m_j} \rangle=&&\tilde{\alpha}^{p^2,C}_{j}\sqrt{(j+m_j)(j-m_j)}|^{p^2,C}_{j-1,m_j} \rangle\\ \nonumber
&& +\tilde{\beta}^{p^2,C}_j m_j|^{p^2,C}_{j,m_j} \rangle+\\  \nonumber
&&\tilde{\alpha}^{p^2,C}_{j+1}\sqrt{(j+m_j+1)(j-m_j+1)}|^{p^2,C}_{j+1,m_j} \rangle
\end{eqnarray}
\begin{eqnarray}
 \eta_{p^2,C}(P_{\pm})|^{p^2,C}_{j,m_j}\rangle=&& \pm \tilde{\alpha}^{p^2,C}_{j}\sqrt{(j\mp m_j)(j\mp m_j-1)}|^{p^2,C}_{j-1,m_j \pm 1}\rangle\\  \nonumber
&& +\tilde{\beta}^{p^2,C}_j\sqrt{(j \mp m_j)(j\pm m_j+1)}|^{p^2,C}_{j,m_j \pm 1}\rangle\\  \nonumber
&& \mp \tilde{\alpha}^{p^2,C}_{j+1}\sqrt{(j \pm m_j+1)(j \pm m_j+2)}|^{p^2,C}_{j+1,m_j  \pm 1}\rangle
\end{eqnarray}
where
\begin{equation}
\tilde{\beta}_j^{p^2,C}=\frac{C}{j(j+1)}
\end{equation}
\begin{equation}
\tilde{\alpha}_j^{p^2,C}=\sqrt{\frac{p^2-\frac{C^2}{j^2}}{(2j+1)(2j-1)}}
\end{equation}

\section{Contraction of the matrix elements}

In this section, we first  recall the definition for contraction and give the contraction of the algebra $so(4)$ to $iso(3)$. Then, for each of the representations $\eta_{p^2,C}$ we specify a suitable sequence of the representations $\rho_{F(n),G(n)}$ such that we obtain the desired convergence of matrix elements. We will not address the question of contraction of the group representations which was solved by  Dooley and Rice \cite{Doo} and was considered by others \cite{Wolf,Wong1,Wong2}.
 We note that a contraction of the representations of $so(3,1)$  to those of $iso(3)$ was done by Weimar-Woods \cite{Wei4}.  \newline

\begin{center}
\item \subsection{Contraction of $so(4)$ to $iso(3)$}
\end{center}
We recall the formal definition for a contraction of Lie algebras. Our notations are similar to those of Weimar-Woods ~\cite{Wei3}.
\begin{definition}
Let $U$ be a complex or real vector space. Let $\g=(U,[\_,\_])$ be a Lie algebra with Lie product $[\_,\_]$. For any $\epsilon \in (0,1]$ let $t_{\epsilon} \in Aut(U)$ ($t_{\epsilon}$ is a linear invertible operator on $U$) and for every $X,Y \in U$ we define
\begin{eqnarray}
[X,Y]_{\epsilon}=t^{-1}_{\epsilon}([t_{\epsilon}(X),t_{\epsilon}(Y)]).
\label{eq:11}
\end{eqnarray}
If the limit
\begin{eqnarray}
[X,Y]_0=\lim_{\epsilon \longrightarrow 0^{+}}[X,Y]_{\epsilon}
\label{eq4.2}
\end{eqnarray}
exists for all $X,Y \in U$, then $[\_,\_]_0$ is a Lie product on $U$ and the Lie algebra $\mathcal{G}_0=(U,[\_,\_]_0)$ is called the contraction of $\mathcal{G}$ by $t_{\epsilon}$ and we write $\mathcal{G}\stackrel{t(\epsilon)}{\rightarrow} \mathcal{G}_0$.
\end{definition}
There is an analogous definition ~\cite{Wei3} for the case that  the limit (\ref{eq4.2}) is meaningful only on a sequence:
\begin{definition}
Let $U$ be a complex or real  vector space, $\g=(U,[\_,\_])$ a Lie algebra with Lie product $[\_,\_]$. For any $n \in \mathbb{N}$ let $t_n \in Aut(U)$  and for every $X,Y \in U$ we define
\begin{eqnarray}
[X,Y]_{n}=t^{-1}_{n}([t_{n}(X),t_{n}(Y)]).
\label{eq:11}
\end{eqnarray}
If the limit
\begin{eqnarray}
[X,Y]_{\infty}=\lim_{n \longrightarrow \infty}[X,Y]_{n}
\label{eq:2}
\end{eqnarray}
exists for all $X,Y \in U$, then $[\_,\_]_{\infty}$ is a Lie product on $U$ and the Lie algebra $\mathcal{G}_{\infty}=(U,[\_,\_]_{\infty})$ is called the contraction of $\mathcal{G}$ by $t_{n}$ and we write $\mathcal{G}\stackrel{t_n}{\rightarrow} \mathcal{G}_{\infty}$
\end{definition}
Specific examples of contractions of Lie algebras can be found in e.g., \cite{IW1, Wei1, Sal, Gil}.\newline

For the $so(4)\rightarrow iso(3)$ case we define the contraction transformation to be $t_{\epsilon}(M_i)=M_i$, $t_{\epsilon}(N_i)=\epsilon N_i$ for every $i \in \left\{1,2,3 \right\} $.  Then we easily see that:
\begin{eqnarray}
&&[M_i,M_j]_0=i\epsilon_{ijk}M_k \\
&&[N_i,N_j]_0=0 \\
&&[M_i,N_j]_0=i\epsilon_{ijk}N_k
\end{eqnarray}
We recall that
\begin{eqnarray}
&&[J_i,J_j]=i\epsilon_{ijk}J_k \\
&&[P_i,P_j]=0 \\
&&[J_i,P_j]=i\epsilon_{ijk}P_k
\end{eqnarray}
and we see that the linear map $\psi$, from the contracted Lie algebra, $so(4)_0$  to $iso(3)$ which is defined by $\psi(M_i)=J_i$, $\psi(N_i)=P_i$  for $i \in \left\{1, 2, 3 \right\}$ is a Lie algebra isomorphism.
\begin{center}
\item \subsection{convergence of the matrix elements}
\end{center}
Fix a representation $\eta_{p_1^2,C_1}$ of $iso(3)$ and define
\begin{eqnarray}
j_0^1=\sqrt{\frac{C_1^2}{p_1^2}}
\label{eq4.11}
\end{eqnarray}
We define a sequence of representations  which consists of some of the representations  $\rho_{(F,G)}$, as follows. We take those $\rho_{(F,G)}$ such that the value of their  $j_0$ parameter equals $j_0^1$ and such that $sgn(G)=sgn(C_1)$. There is exactly one irreducible representation for each admissible value of $n$,
where the admissible values of $n$ are $I=\left\{j_0^1, j_0^1+1, j_0^1+2, ... \right\}$. We can describe this sequence by $\left\{(\rho_{(F(n),G(n)}, V_{F(n),G(n)}) \right\}_{n \in I}$ where
\begin{eqnarray}
&& G(n)=sgn{(C_1)} j^1_0(n+1) \\
&& F(n)=\frac{(j^1_0)^2+(n+1)^2-1}{2}
\end{eqnarray}

Before we prove the convergence of matrix elements we need the following technical proposition:
\begin{proposition}
For
\begin{eqnarray}
{\epsilon}_n=\sqrt{\frac{p_1^2}{2F(n)}}=\sqrt{\frac{p_1^2}{j_{0,1}^2+(n+1)^2-1}}
\label{Eq.75}
\end{eqnarray}
the following hold
\begin{eqnarray}
	&& \lim_{n\longrightarrow \infty}\epsilon_n \beta^{F(n),G(n)}_j=\tilde{\beta}^{p^2_1,C_1}_j \label{Eq4.15} \\
	&& \lim_{n\longrightarrow \infty}\epsilon_n \alpha^{F(n),G(n)}_{j}=\tilde{\alpha}_{j}^{p^2_1,C_1} \label{Eq4.16}	
\end{eqnarray}
\end{proposition}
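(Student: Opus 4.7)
The plan is to substitute the explicit formulas for $\epsilon_n$, $F(n)$, $G(n)$, $\beta_j^{F,G}$, and $\alpha_j^{F,G}$, and then extract the dominant asymptotic behavior as $n\to\infty$. The key fact that ties everything together is the identity $C_1^2 = (j_0^1)^2 p_1^2$, which follows from the definition (\ref{eq4.11}), together with $\mathrm{sgn}(G(n))=\mathrm{sgn}(C_1)$, so that $\mathrm{sgn}(C_1)\,j_0^1\,|p_1|=C_1$.

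For the $\beta$-limit (\ref{Eq4.15}), I would write
\begin{equation*}
\epsilon_n\,\beta_j^{F(n),G(n)}=\frac{1}{j(j+1)}\cdot\mathrm{sgn}(C_1)\,j_0^1\,(n+1)\sqrt{\frac{p_1^2}{(j_0^1)^2+(n+1)^2-1}}.
\end{equation*}
Since $(n+1)/\sqrt{(j_0^1)^2+(n+1)^2-1}\to 1$, the expression converges to $\mathrm{sgn}(C_1)\,j_0^1\,|p_1|/(j(j+1))=C_1/(j(j+1))=\tilde\beta_j^{p_1^2,C_1}$, which is (\ref{Eq4.15}).

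For the $\alpha$-limit (\ref{Eq4.16}), since both sides are non-negative it suffices to show convergence of the squares. I would compute
\begin{equation*}
\bigl(\epsilon_n\,\alpha_j^{F(n),G(n)}\bigr)^2=\frac{p_1^2}{(2j+1)(2j-1)}\cdot\frac{2F(n)+1-j^2-G(n)^2/j^2}{2F(n)}.
\end{equation*}
Substituting $2F(n)=(j_0^1)^2+(n+1)^2-1$ and $G(n)^2=(j_0^1)^2(n+1)^2$, the second factor becomes
\begin{equation*}
\frac{(n+1)^2\bigl(1-(j_0^1)^2/j^2\bigr)+\bigl((j_0^1)^2-j^2\bigr)}{(n+1)^2+\bigl((j_0^1)^2-1\bigr)},
\end{equation*}
and dividing numerator and denominator by $(n+1)^2$ shows that this ratio tends to $1-(j_0^1)^2/j^2$. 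Using $(j_0^1)^2 p_1^2=C_1^2$ to rewrite $p_1^2(1-(j_0^1)^2/j^2)=p_1^2-C_1^2/j^2$ gives the desired limit $\tilde\alpha_j^{p_1^2,C_1}$.

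The computations are essentially straightforward asymptotic analysis; the only subtlety is the sign bookkeeping in the $\beta$-limit, where the choice $\mathrm{sgn}(G(n))=\mathrm{sgn}(C_1)$ in the specification of the sequence is precisely what is needed to reconstruct $C_1$ from $j_0^1$ and $|p_1|$. I would also briefly note in passing that a priori the expression under the square root in $\alpha_j^{F(n),G(n)}$ is non-negative for the values of $j$ that appear in the basis of $V_{F(n),G(n)}$, so no issues of complex square roots arise before taking the limit.
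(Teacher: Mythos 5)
Your proof is correct and follows essentially the same route as the paper: substitute the explicit formulas for $\epsilon_n$, $F(n)$, $G(n)$ and take the limit, using $C_1^2=(j_0^1)^2p_1^2$ and $\mathrm{sgn}(G(n))=\mathrm{sgn}(C_1)$ to recover $C_1$. The paper only writes out the $\beta$-case and dismisses the $\alpha$-case as ``similar,'' whereas you carry it out explicitly via the squares, which is a harmless (and slightly more careful) elaboration of the same argument.
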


\begin{proof}
For (\ref{Eq4.15}) we observe that
\begin{eqnarray}
&& \lim_{n\longrightarrow \infty}\epsilon_n \beta^{F(n),G(n)}_{j}=\lim_{n\longrightarrow \infty}\sqrt{\frac{p_1^2}{2F(n)}}\frac{G(n)}{j(j+1)}=\\ \nonumber
&&  \lim_{n\longrightarrow \infty}\sqrt{\frac{p_1^2}{(j_0^1)^2+(n+1)^2-1}}\frac{(sign{(C_1)} j^1_0(n+1))}{j(j+1)}= \sqrt{p_1^2}\frac{(sign{(C_1)} j^1_0)}{j(j+1)}\\ \nonumber
&& \underbrace{=}_{(\ref{eq4.11})}  \sqrt{p_1^2}\frac{(sign{(C_1)} \sqrt{\frac{C_1^2}{p_1^2}})}{j(j+1)}= \frac{C_1}{j(j+1)}=\tilde{\beta}_{j}^{p^2_1,C_1}
\end{eqnarray}
 (\ref{Eq4.16}) is  obtained similarly.
\end{proof}
\begin{theorem}
For any  $|^{F(n),G(n)}_{j,m_j} \rangle$, $|^{F(n),G(n)}_{\tilde{j},\tilde{m}_{\tilde{j}}} \rangle$ $\in V_{F(n),G(n)}$ and any \newline $X\in so(4)$
\begin{eqnarray}\label{eq4.17}
&&\lim_{n\rightarrow \infty}\left\langle^{F(n),G(n)}_{\tilde{j},\tilde{m}_{\tilde{j}}}|\rho_{F(n),G(n)}(t_n(X))  |^{F(n),G(n)}_{j,m_j} \right\rangle=\\ \nonumber
&& \left\langle^{p^2_1,C_1}_{\tilde{j},\tilde{m}_{\tilde{j}}}|\eta_{p^2_1,C_1}(\psi(X))  |^{p^2_1,C_1}_{j,m_j} \right\rangle
\end{eqnarray}
where $t_n=t(\epsilon_n)$.
\end{theorem}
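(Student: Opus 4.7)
The plan is to reduce the claim to a finite check on basis elements of $so(4)$ and then invoke the proposition. By $\mathbb{R}$-linearity (or $\mathbb{C}$-linearity after complexification) of every map appearing in \eqref{eq4.17} in the slot $X$, it suffices to verify the identity for $X$ ranging over the six-element basis $\{M_+,M_-,M_3,N_+,N_-,N_3\}$ of $so(4)$. For all $n$ large enough, we have $j,\tilde{j}\in\{j_0^1,j_0^1+1,\dots,n\}$, so the labels $|^{F(n),G(n)}_{j,m_j}\rangle$ and $|^{F(n),G(n)}_{\tilde{j},\tilde{m}_{\tilde{j}}}\rangle$ denote genuine orthonormal basis vectors of $V_{F(n),G(n)}$; the matrix element on the left is then well defined.

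First I would handle the three generators $M_+,M_-,M_3$. Since $t_n(M_i)=M_i$, the matrix elements of $\rho_{F(n),G(n)}(M_\pm)$ and $\rho_{F(n),G(n)}(M_3)$ in the Pauli basis are given by formulas that are literally independent of $F(n),G(n)$, and moreover identical to the formulas for $\eta_{p_1^2,C_1}(J_\pm)$ and $\eta_{p_1^2,C_1}(J_3)$ in the basis of $W_{p_1^2,C_1}$. So both sides of \eqref{eq4.17} agree \emph{for every} sufficiently large $n$, and the limit is trivial.

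Next I would handle $N_3$, and then $N_\pm$ entirely analogously. For $X=N_3$ one has $t_n(X)=\epsilon_n N_3$, so by the explicit formula
\begin{equation*}
\rho_{F(n),G(n)}(\epsilon_n N_3)|^{F(n),G(n)}_{j,m_j}\rangle
\end{equation*}
is a linear combination of the three orthonormal vectors indexed by $(j-1,m_j)$, $(j,m_j)$, $(j+1,m_j)$ with coefficients $\epsilon_n\alpha^{F(n),G(n)}_j\sqrt{(j+m_j)(j-m_j)}$, $\epsilon_n\beta^{F(n),G(n)}_j m_j$, and $\epsilon_n\alpha^{F(n),G(n)}_{j+1}\sqrt{(j+m_j+1)(j-m_j+1)}$. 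Pairing with $\langle^{F(n),G(n)}_{\tilde{j},\tilde{m}_{\tilde{j}}}|$ and using orthonormality picks out at most one term (determined by $\tilde{j}-j\in\{-1,0,1\}$, $\tilde{m}_{\tilde{j}}=m_j$). Now applying the proposition, $\epsilon_n\alpha^{F(n),G(n)}_j\to\tilde{\alpha}^{p_1^2,C_1}_j$ and $\epsilon_n\beta^{F(n),G(n)}_j\to\tilde{\beta}^{p_1^2,C_1}_j$, so the limit of the surviving coefficient is exactly the corresponding coefficient in the formula for $\eta_{p_1^2,C_1}(P_3)|^{p_1^2,C_1}_{j,m_j}\rangle$; since $\psi(N_3)=P_3$, this is the right-hand side of \eqref{eq4.17}. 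The argument for $N_\pm$ is identical, matching the three terms and using $\psi(N_\pm)=P_\pm$.

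I do not anticipate a real obstacle: the entire theorem is essentially a term-by-term comparison made possible by the deliberate structural identity of the Pauli-style formulas for $\rho_{F,G}$ and $\eta_{p^2,C}$. The only mildly delicate point is the choice of $\epsilon_n$ in \eqref{Eq.75}, which is precisely what makes $\epsilon_n\alpha^{F(n),G(n)}_j$ and $\epsilon_n\beta^{F(n),G(n)}_j$ converge to the correct $iso(3)$-quantities; the proposition has already done that work, so the proof of the theorem reduces to writing out the six cases cleanly.
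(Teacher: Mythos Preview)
Your proposal is correct and follows essentially the same approach as the paper: reduce by linearity to the six basis elements, observe that the $M$-generators give $n$-independent matrix elements matching the $J$-side, and for the $N$-generators invoke the proposition on $\epsilon_n\alpha^{F(n),G(n)}_j$ and $\epsilon_n\beta^{F(n),G(n)}_j$ to obtain the $P$-side. Your remark that the labels $(j,m_j)$ and $(\tilde{j},\tilde{m}_{\tilde{j}})$ are admissible for all sufficiently large $n$ is a small but worthwhile clarification the paper leaves implicit.
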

\begin{proof}
We note that from linearity it is enough to prove that (\ref{eq4.17}) holds for $X\in \left\{M_+ ,M_- ,M_{3} ,N_+ ,N_- ,N_{3} \right\}$. We have:
\begin{eqnarray}
&& \lim_{n \longrightarrow \infty} \left\langle ^{F(n),G(n)}_{j,m_j}| \rho_{F(n),G(n)}(t_n (M_3))|^{F(n),G(n)}_{j,m_j} \right\rangle= \label{Eq.76}\\ \nonumber
&&\lim_{n \longrightarrow \infty} \left\langle ^{F(n),G(n)}_{j,m_j}| m_j|^{F(n),G(n)}_{j,m_j} \right\rangle=m_j=\left\langle^{p^2_1,C_1}_{j,m_{j}}|\eta_{p^2_1,C_1}(\psi(M_3))  |^{p^2_1,C_1}_{j,m_j} \right\rangle
\end{eqnarray}
\begin{eqnarray}
&& \lim_{n\longrightarrow \infty}\left\langle ^{F(j),G(j)}_{j,m_j\pm 1}| \rho_{F(n),G(n)}(t_n( M_{\pm}))|^{F(n),G(n)}_{j,m_j} \right\rangle=\\ \nonumber
&& \lim_{n\longrightarrow \infty} \sqrt{(j\mp m_j)(j\pm m_j+1)}= \sqrt{(j\mp m_j)(j\pm m_j+1)}= \\ \nonumber
&& =\left\langle^{p^2_1,C_1}_{j,m_j\pm 1}|\eta_{p^2_1,C_1}(\psi(M_{\pm}))  |^{p^2_1,C_1}_{j,m_j} \right\rangle
\end{eqnarray}
\begin{eqnarray}
&&\lim_{n\longrightarrow \infty} \sum_{k=-1}^{1} {\langle ^{F(n),G(n)}_{j+k,m_j}|  \rho_{F(n),G(n)} (t_{\epsilon_n}(N_{3}))|^{F(n),G(n)}_{j,m_j} \rangle} =\\ \nonumber
&& \lim_{n\longrightarrow \infty} \epsilon_n \alpha_{j}^{F(n),G(n)}\sqrt{(j+m_j)(j-m_j)}
 +\epsilon_n \beta^{F(n),G(n)}_j m_j|^{F(n),G(n)}_{j,m_j} +\\ \nonumber
&& \epsilon_n \alpha^{F(n),G(n)}_{j+1}\sqrt{(j+m_j+1)(j-m_j+1)}= \\ \nonumber
&&\tilde{\alpha}_{j}^{p_1^2,C_1}\sqrt{(j+m_j)(j-m_j)}
 +\tilde{\beta}^{p_1^2,C_1}_j m_j|^{F(n),G(n)}_{j,m_j} +\\ \nonumber
&& \tilde{\alpha}^{p_1^2,C_1}_{j+1}\sqrt{(j+m_j+1)(j-m_j+1)}= \sum_{k=-1}^{1} {\langle^{p^2_1,C_1}_{j+k,m_j}}|\eta_{p^2_1,C_1}(\psi(N_{3}))  |^{p^2_1,C_1}_{j,m_j} \rangle
\end{eqnarray}
where we have used proposition 1. Similarly:
\begin{eqnarray}
&& \lim_{n\longrightarrow \infty} \sum_{k=-1}^{1} {\langle ^{F(n),G(n)}_{j+k,m_j \pm 1}} |  \rho_{F(n),G(n)} (t_{\epsilon_n}(N_{\pm}))|^{F(n),G(n)}_{j,m_j} \rangle= \\ \nonumber
&&\lim_{n\longrightarrow \infty} \pm \epsilon_n \alpha^{F(n),G(n)}_{j}\sqrt{(j\mp m_j)(j\mp m_j-1)} +\\ \nonumber
&& \epsilon_n\beta^{F(n),G(n)}_j\sqrt{(j \mp m_j)(j\pm m_j+1)}\mp \\ \nonumber &&\epsilon_n\alpha^{F(n),G(n)}_{j+1}\sqrt{(j \pm m_j+1)(j \pm m_j+2)}= \pm \tilde{\alpha}^{p_1^2,C_1}_{j}\sqrt{(j\mp m_j)(j\mp m_j-1)} +\\ \nonumber
&& \tilde{\beta}^{p_1^2,C_1}_j\sqrt{(j \mp m_j)(j\pm m_j+1)}\mp \tilde{\alpha}^{p_1^2,C_1}_{j+1}\sqrt{(j \pm m_j+1)(j \pm m_j+2)}=\\ \nonumber
&& \sum_{k=-1}^{1} \langle^{p^2_1,C_1}_{j+k,m_j \pm 1}|\eta_{p^2_1,C_1}(\psi(N_{\pm}))  |^{p^2_1,C_1}_{j,m_j} \rangle
\end{eqnarray}
All the other matrix elements vanish and obviously satisfy (\ref{eq4.17}).
\end{proof}



\begin{figure}[h]
\scalebox{0.85}{\includegraphics{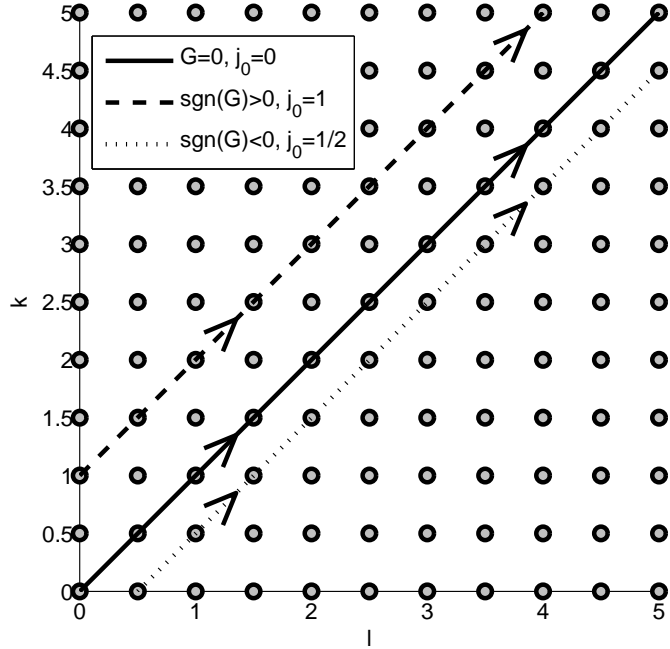}}
\caption{Grid of $so(4)$ finite dimensional irreducible representations and some of their contractions. Note the sign conventions
indicated in the box (upper left insert).}
\label{fig1}
\end{figure}
\begin{center}
\item \subsection{Graphical representation of the contraction process}
\end{center}
In figure \ref{fig1} each point with coordinates $(k,l)$ represents the irreducible representation of $so(4)$ which we denoted by $\rho^k\otimes \rho^l$. In each "diagonal" line, $j_0$ is constant and equal to the value of $|k-l|$ (those are the lines $k-l=\pm j_0$ in the $k,l$ plane). Going along each diagonal line in the direction of the arrow (which is equivalent to taking $\epsilon_n$ to zero) we are increasing the value of $n$ by one unit at each step , and this is the picture of the contraction. The solid, dashed and dotted diagonal lines correspond to contractions toward $\eta_{p^2,C}$ with their $j_0$ parameter equal to $0, 1$ and $\frac{1}{2}$ respectively. 


\section{Discussion}
The four-dimensional rotation group, $SO(4)$  occurs as a symmetry group of a physical system. The best known example is as the symmetry group of the Hydrogen atom. The group of isometries of the three-dimensional space, $\mathbb{R}^3$ i.e.,  the  Euclidean group $ISO(3)$ is another group that is  naturally related to many physical systems. Among others, $ISO(3)$ is a subgroup of both Poinca\'re group and Galilei group. The relation  between $SO(4)$  and $ISO(3)$ was only partially studied, e.g., \cite{Hol,Mon,Bob}.\newline
In another work \cite{ES,ES1} we give a  definition for contraction of Lie algebra representations using the notion of direct limit.  We also show there that the convergence of matrix elements implies the convergence in norm of the sequence of operators. This shows that the contraction we obtained here is also a contraction according to the definition in \cite{ES}.\newline

\textbf{Acknowledgments}\newline
AM is grateful to Prof. Weimar-Woods for a helpful discussion.\newline
EMS would like to thank Mr. ShengQuan Zhou for sharing his notes on the contractions of $so(4)$.   \newline
The research of the 2nd author was supported by
the center of excellence of the Israel Science Foundation
grant no. 1438/06.  \newline
JLB thanks the Department of Physics, Technion, for its warm
hospitality and support during visits while this work was
being carried out, and the FRAP-PSC-CUNY for some support.


\end{document}